\documentclass[12pt]{article}
\usepackage{amsmath,amssymb,amsthm}
\usepackage{pstricks,pst-node}
\usepackage{algorithm}
\usepackage{algorithmic}

\parindent0cm

\newtheorem{theorem}{Theorem}[section]
\newtheorem{lemma}[theorem]{Lemma}

\newcommand\co{\mathrm{co}}
\newcommand\sco{\mathcal{CO}}
\newcommand\cc{\mathrm{cc}}
\newcommand\scc{\mathcal{CC}}

\begin{document}
\date{\today}
\title{Convex sets in acyclic digraphs}
\author{Paul Balister\thanks{Department of Mathematical Sciences,
University of Memphis, TN 38152-3240, USA. E-mail:
pbalistr@memphis.edu} $^\S$ \and Stefanie Gerke\thanks{Department of
Mathematics, Royal Holloway, University of London, Egham, TW20 0EX,
UK, E-mail: stefanie.gerke@rhul.ac.uk} \and Gregory
Gutin\thanks{Department of Computer Science, Royal Holloway,
University of London, Egham, TW20 0EX, UK, E-mail:
gregory.gutin@rhul.ac.uk} } \maketitle
\begin{abstract}
A non-empty set $X$ of vertices of an acyclic digraph is called
connected if the underlying undirected graph induced by $X$ is
connected and it is called convex if no two vertices of $X$ are
connected by a directed path in which some vertices are not in
$X$. The set of convex sets (connected convex sets) of an acyclic
digraph $D$ is denoted by $\sco(D)$ ($\scc(D)$) and its size by
$\co(D)$ ($\cc(D)$). Gutin, Johnstone, Reddington, Scott,
Soleimanfallah, and Yeo  (Proc. ACiD'07)  conjectured that the sum
of the sizes of all (connected) convex sets in $D$ equals
$\Theta(n \cdot \co(D))$ ($\Theta(n \cdot \cc(D))$) where $n$ is
the order of $D$.

In this paper we exhibit a family of connected acyclic digraphs with
$\sum_{C\in \sco(D)}|C| = o(n\cdot \co(D))$ and $\sum_{C\in
\scc(D)}|C| = o(n\cdot \cc(D))$.  We also show that the number of
connected convex sets of order $k$ in any connected acyclic digraph
of order $n$ is at least $n-k+1$. This is a strengthening of a
theorem by Gutin and Yeo.
\end{abstract}

\section{Introduction}
Let $D$ be an acyclic digraph of order $n$. A non-empty set $X$ of
vertices in $D$ is {\em connected} if the underlying undirected
graph of $D[X]$, the subgraph of $D$ induced by $X$, is connected.
A non-empty set $X$ of vertices in $D$ is {\em convex} if there is
no directed path in $D$ between vertices of $X$ containing a
vertex not in $X$. The set of all convex sets of $D$ is denoted by
$\sco(D)$ and its size by $\co(D)$. The set of all connected
convex sets of $D$ is denoted by $\scc(D)$ and its size by
$\cc(D)$. Convex sets and connected convex sets in acyclic
digraphs are of interest in the field of custom computing in which
central processor architectures are parameterized for particular
applications, see, e.g., \cite{chen,gjrssy07}.

Gutin, Johnstone, Reddington, Scott, Soleimanfallah, and Yeo
\cite{gjrssy07} introduced an algorithm $\cal A$ determining all
connected convex sets of $D$ in time $O(n\cdot \cc(D)).$ They
observed that $\cal A$ can be modified to produce all convex sets in
time $O(n\cdot \co(D)).$ The authors of \cite{gjrssy07} conjectured
that the sum of the sizes of all convex sets (all connected convex
sets, respectively) in $D$ equals $\Theta(n \cdot \co(D))$
($\Theta(n\cdot \cc(D))$, respectively). If the conjecture were
true, then their algorithms would be optimal. The conjecture can be
formulated differently. Let $\bar{s}_{\co}(D)$ and
$\bar{s}_{\cc}(D)$ be the average size of a convex set and the
average size of a connected convex set in $D$. The conjecture claims
that $\bar{s}_{\co}(D)=\Theta(n)$ and $\bar{s}_{\cc}(D)=\Theta(n).$

In this paper we disprove both parts of the conjecture by exhibiting
a family $\cal F$ of digraphs for which
$\bar{s}_{\co}(D)=O(\sqrt{n})$ and $\bar{s}_{\cc}(D)=O(\sqrt{n})$;
see Section~\ref{sec:example}. In Section~\ref{sec:size} we show
that each connected digraph of order $n$ contains at least  $n-k+1$
connected convex sets of size $k$ for each $1\leq k \leq n$. This
extends a result of Gutin and Yeo \cite{gy07} who showed that each
connected acyclic digraph of order $n$ has at least $n(n+1)/2$
connected convex sets.

To simplify notation in the rest of the paper, we use $n$ for the
order of the digraph under consideration; $[m]$ will denote the
set $\{1,2,\ldots ,m\}$ ($m$ is a positive integer). A vertex $x$
of $D$ is a {\em source} ({\em sink}) if its in-degree $d^-(x)$
(out-degree $d^+(x)$) equals zero. A vertex $v$ of a connected
digraph $D$ is a {\em cut-vertex} if $D-v$ is not connected, i.e.,
$V(D)-v$ is not connected.

\section{Counterexample}\label{sec:example}

\begin{theorem}
There is a family $\cal F$ of digraphs such that $\bar{s}_{\rm
co}(D)=O(\sqrt{n})$ and $\bar{s}_{\rm cc}(D)=O(\sqrt{n})$ for each
$D\in \cal F$.
\end{theorem}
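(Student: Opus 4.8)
The plan is to build each $D\in\cal F$ from roughly $\sqrt n$ \emph{blocks} $A_1,\dots,A_m$, each an edgeless antichain on about $\sqrt n$ vertices, glued together by a forcing gadget, and to argue that almost every convex set lives inside a single block. Throughout I would work with the identity $\sum_{C\in\sco(D)}|C|=\sum_{v\in V(D)}\co_v(D)$, where $\co_v(D)$ denotes the number of convex sets containing $v$. Dividing by $\co(D)$ shows that $\bar{s}_{\co}(D)$ is the average over $v$ of the inclusion fractions $\co_v(D)/\co(D)$, so it suffices to show that a uniformly random convex set contains only $O(\sqrt n)$ vertices on average. The whole difficulty is thus to exhibit a \emph{connected} digraph in which most vertices lie in a vanishing fraction of the convex sets; note that disjoint unions cannot help, since there the average size is essentially the sum of the component averages.

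First I would dispatch the easy half. Because each block $A_i$ is an antichain, every one of its $2^{|A_i|}$ subsets is convex, so the blocks alone already give at least $2^{\Omega(\sqrt n)}$ convex sets, each of size at most $\sqrt n$. These \emph{local} sets supply a super-polynomial lower bound $\co(D)\ge 2^{\Omega(\sqrt n)}$ and contribute at most $\sqrt n$ to the average. The heart of the argument is then to control the \emph{global} convex sets, those meeting two different blocks. The gadget must be designed so that whenever a convex set contains a vertex of $A_i$ and a vertex of $A_j$ with $i\ne j$, a directed path running through the gadget forces an essentially rigid $\Theta(n)$-size core into the set, leaving only $\mathrm{poly}(n)$ completions in total. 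Granting this, the global sets contribute at most $n\cdot\mathrm{poly}(n)$ to $\sum_{C}|C|$, which is negligible against $\sqrt n\cdot\co(D)\ge\sqrt n\cdot 2^{\Omega(\sqrt n)}$; combined with the local estimate this gives $\bar{s}_{\co}(D)=O(\sqrt n)$. The balance $\sqrt n\times\sqrt n=n$ is exactly what forces both the block size and the resulting average to be of order $\sqrt n$.

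The connected case is analogous, and in fact more forgiving. Inside an antichain block the only connected convex sets are singletons, so the local connected convex sets are the $n$ singletons together with whatever small connected sets the gadget introduces, all of size $O(\sqrt n)$, while the global connected convex sets are again only the polynomially many large forced sets. Feeding these bounds into the same counting identity for $\scc(D)$ yields $\bar{s}_{\cc}(D)=O(\sqrt n)$ as well.

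The step I expect to be the genuine obstacle is the gadget design, namely making the inter-block coupling \emph{rigid}. The naive ways of joining the blocks all fail in the same way: a common source or sink, or a sparse matching-type connection, lets the $2^{\sqrt n}$ free choices inside a block multiply with a linear-sized family of intervals coming from the connecting part, producing exponentially many convex sets of size $\Theta(n)$ and pushing the average back to $\Theta(n)$; conversely a dense (complete-bipartite) coupling forces entire intermediate blocks and again yields $\Theta(n)$-size sets. What is needed is that crossing between two blocks force a single spanning core rather than a whole interval-family, so that the number of global convex sets stays polynomial. Pinning down such a gadget, and then checking acyclicity, connectivity, and that the forced cores really do leave only $\mathrm{poly}(n)$ global convex sets of size $O(n)$, is where the bulk of the verification lies.
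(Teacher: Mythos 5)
You have correctly located the place where your proof must fail, but the obstacle you flag is not a matter of clever engineering: the gadget you need provably cannot exist. Let $A_i$ and $A_j$ be two of your antichain blocks in \emph{any} acyclic digraph, write $u\le v$ for reachability, $[u,w]=\{v:u\le v\le w\}$, and recall that the convex hull of a set $T$ is exactly $\bigcup_{u,w\in T}[u,w]$ (this union is already convex: if $x\in[u_1,w_1]$, $y\in[u_2,w_2]$ and $x\le v\le y$, then $v\in[u_1,w_2]$). Fix nonempty $S\subseteq A_i$ and $S'\subseteq A_j$ and let $H(S,S')$ be the hull of $S\cup S'$. I claim $H(S,S')\cap A_i=S$. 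Indeed, if $b\in[u,w]\cap A_i$ with $u,w\in S\cup S'$, then: if $u\in A_i$, the relation $u\le b$ inside the antichain $A_i$ forces $b=u\in S$; if $u\in A_j$ and $w\in A_i$, then $b\le w$ forces $b=w\in S$; and if $u,w\in A_j$, then $u\le w$ forces $u=w$, whence $b=u\in A_j$, impossible. Symmetrically $H(S,S')\cap A_j=S'$, so the map $(S,S')\mapsto H(S,S')$ is injective. Thus there are at least $(2^{|A_i|}-1)(2^{|A_j|}-1)=2^{\Omega(\sqrt n)}$ convex sets meeting both blocks, \emph{no matter how the blocks are glued together}. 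Your requirement that the global sets number $\mathrm{poly}(n)$ is therefore unachievable, and your accounting collapses: the cross-block sets (at least $2^{2\sqrt n-2}$ of them) always swamp the roughly $\sqrt n\cdot 2^{\sqrt n}$ single-block sets, so under your own design goal that every cross-block set carries a rigid $\Theta(n)$-core, the dominant sets have linear size and the average is driven back to $\Theta(n)$; the only escape is to have the count dominated by sets living in the small connecting gadget, which is no longer your construction. The connected case has a further gap: with antichain blocks, the small connected convex sets you exhibit are just the $n$ singletons, so $\cc(D)$ is not even guaranteed to be large enough to absorb polynomially many linear-size sets.

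The paper's construction is, in effect, your design turned inside out, and it is worth seeing why that orientation is forced. Most vertices are placed in two long directed \emph{paths} (chains, the opposite of antichains), which are maximally rigid: a convex set meets a chain in an interval, so the number of convex sets of size greater than $2r+1$ is only $\Theta(t2^r)$, where $n=\Theta(t)$ and $r=\lceil\sqrt t\,\rceil$. The exponential supply of convex sets comes from a \emph{small} gadget of $2r+1$ vertices joining the two paths: antichains $Y,Y'$ of size $r$ with arcs $x_ty_j$, $y_jz$, $zy'_j$, $y'_jx'_1$, so that every $Q\cup\{z\}\cup Q'$ with $Q\subseteq Y$, $Q'\subseteq Y'$ is convex and connected, giving $2^{2r}$ sets that are automatically of size $O(\sqrt n)$. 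Hence the count is dominated by sets confined to $O(\sqrt n)$ vertices, while the bulk of the vertices sit in a structure contributing almost nothing to the count, and the average is at most $(2r+2)+O(t\cdot t2^r/2^{2r})=O(\sqrt n)$. Your plan scales the free (antichain) part up to cover the whole graph, and the injection above shows exactly why that cannot work; the counterexample needs rigid bulk plus a small free gadget, not many free blocks.
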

\begin{proof} For $t=1,2,\ldots$ and $r=\lceil
\sqrt{t}\rceil$, the acyclic digraph $D_t$ consists of vertex set
$V(D_t)=X\cup Y\cup \{z\}\cup Y'\cup X'$, where $$X=\{x_i:\ i\in
[t]\}, X'=\{x'_i:\ i\in [t]\}, Y=\{y_j:\ i\in [r]\}, Y'=\{y'_j:\
i\in [r]\},$$ and arc set
$$ A(D_t)= \{x_ix_{i+1},x'_ix'_{i+1}:\ i\in [t-1]\}\cup
 \{x_ty_j,y_jz,zy'_j,y'_jx'_1:j\in [r] \}.$$

For illustration, see Figure~\ref{fig:few}.
\begin{figure}
\begin{center}
\begin{pspicture}(13.0,4.0)
\psset{linecolor=black} \psset{linewidth=1.5pt}
\psset{unit=9mm}%
\cnode*(0,2.5){0.05}{n4} \cnode*(0.5,2.5){0.05}{n4a}
\cnode*(1,2.5){0.05}{n4b}%
\cnode*(1.5,2.5){0.15}{n5}%
\rput(1.5,2){$x_{t-3}$} \cnode*(2.5,2.5){0.15}{n6}
\rput(2.5,2){$x_{t-2}$}
\cnode*(3.5,2.5){0.15}{n7}%
\rput(3.5,2){$x_{t-1}$}
\cnode*(4.5,2.5){0.15}{n8}%
\rput(4.5,2){$x_t$}

\ncline{->}{n5}{n6}  \ncline{->}{n6}{n7} \ncline{->}{n7}{n8}

\cnode*(5.5,4.5){0.15}{m1}%
\cnode*(5.5,3.5){0.15}{m2}%
\cnode*(5.5,2.){0.05}{m3}%
\cnode*(5.5,2.5){0.05}{m3a} \cnode*(5.5,3.0){0.05}{m3b}
\cnode*(5.5,1.5){0.15}{m4}
\cnode*(5.5,0.5){0.15}{m5}%

\cnode*(6.5,2.5){0.15}{d} \rput(6.5,2){$z$}

 \ncline{->}{n8}{m1} \ncline{->}{n8}{m2} \ncline{->}{n8}{m4}  \ncline{->}{n8}{m5}

\ncline{->}{m1}{d} \ncline{->}{m2}{d} \ncline{->}{m4}{d}
\ncline{->}{m5}{d}

\cnode*(7.5,2.5){0.05}{m6}%
\cnode*(7.5,2.0){0.05}{m6a} \cnode*(7.5,3.0){0.05}{m6b}
\cnode*(7.5,3.5){0.15}{m7}%
\cnode*(7.5,4.5){0.15}{m8}%
\cnode*(7.5,1.5){0.15}{m9}
\cnode*(7.5,0.5){0.15}{m10}%

 \ncline{->}{d}{m7} \ncline{->}{d}{m8}  \ncline{->}{d}{m9}
\ncline{->}{d}{m10}

\cnode*(8.5,2.5){0.15}{z1}%
\rput(8.5,2){$x'_1$}
\cnode*(9.5,2.5){0.15}{z2}%
\rput(9.5,2){$x'_2$}
\cnode*(10.5,2.5){0.15}{z3}%
\rput(10.5,2){$x'_3$}
\cnode*(11.5,2.5){0.15}{z4}%
\rput(11.5,2){$x'_4$}
\cnode*(12,2.5){0.05}{z5}%
\cnode*(12.5,2.5){0.05}{z6}
\cnode*(13,2.5){0.05}{z7}%

 \ncline{->}{m7}{z1} \ncline{->}{m8}{z1}  \ncline{->}{m9}{z1}
\ncline{->}{m10}{z1}

\ncline{->}{z1}{z2} \ncline{->}{z2}{z3} \ncline{->}{z3}{z4}

\end{pspicture}

\end{center}
\caption{\label{fig:few} {Digraphs from $\cal F$}}
\end{figure}
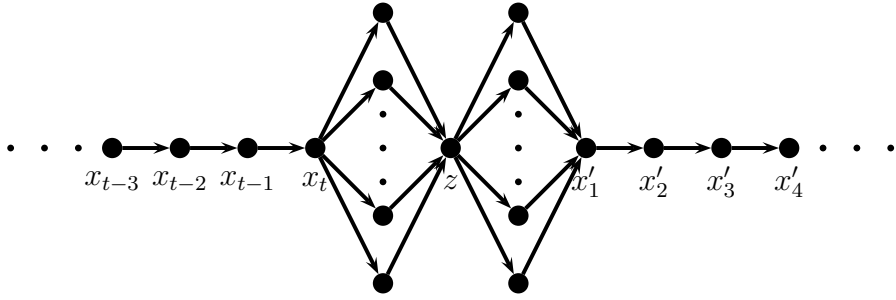

Consider the family $\cal C$ of convex sets of $D_t$ of size at
least $2r+2$. Observe that each set in $\cal C$ contains a vertex in
$X\cup X'.$ Thus, $|{\cal C}|=|{\cal C}_X|+|{\cal C}_{X'}|+|{\cal
C}_{X,X'}|$, where ${\cal C}_X$ (${\cal C}_{X'}$, ${\cal C}_{X,X'}$,
respectively) is the family of sets in $\cal C$ containing a vertex
in $X$ but not in $X'$ (a vertex in $X'$ but not in $X$, vertices in
both $X$ and $X'$, respectively). By symmetry, $|{\cal C}|=2\cdot
|{\cal C}_{X}|+|{\cal C}_{X,X'}|.$ Observe that ${\cal C}_{X}$
consists of

(a) $\Theta(t^2)$ sets in $\cal C$ containing only vertices in $X$;

(b) $\Theta(t2^r)$ sets in $\cal C$ containing only vertices in
$X\cup Y$ but at least one vertex in $Y$;

(c) $\Theta(t2^r)$ sets in $\cal C$ containing $z$ and possibly some
vertices in $Y'$.

Thus, $|{\cal C}_{X}|=\Theta(t2^r).$ In addition, $|{\cal
C}_{X,X'}|=\Theta(t^2)$ and, hence, $|{\cal C}|=\Theta(t2^r).$

For each $Q\subseteq Y$ and $Q'\subseteq Y'$, the set $Q\cup
\{z\}\cup Q'$ is connected and convex. So, there are $2^{2r}$
connected convex sets contained in the set $Y\cup \{z\}\cup Y'.$,
and thus there are $\Omega(2^{2r})$ connected convex sets in
$D_t.$ Hence,
\begin{align*} \bar{s}_{\rm co}(D)&=\frac{\sum_{C\in \sco(D)} |C|}{\co(D)}
\leq \frac{1}{\cc(D)}\left(\sum_{C \in \sco(D) \atop |C| > 2r+1} |C|
\right)+ (2r
+2) \\
& \le O\left(t\frac{t2^r}{2^{2r}}\right)+(2r+2)=O(r),
\end{align*}
and similarly
\[\bar{s}_{\rm cc}(D)\le
O\left(t\frac{t2^r}{2^{2r}}\right)+(2r+2)=O(r).\]
\end{proof}

Let us remark that the digraph in the family $\cal F$ in the
previous proof have asymptotically the same number of convex sets
and connected convex sets. This is generally not true as can be
seen by considering the family of digraphs ${\cal
G}=\{G_1,G_2,\ldots\}$, where the digraph $G_i$ consists of a
source $s$, a sink $t$ and $i$ internally-disjoint directed paths
with two internal vertices each; see Figure~\ref{fig:asym}.
\begin{figure}
\begin{center}
\begin{pspicture}(6.0,4.0)
\psset{linecolor=black} \psset{linewidth=1.5pt}
\psset{unit=9mm}%
\cnode*(0,2){0.15}{s} \cnode*(2,0){0.15}{a1} \cnode*(2,1){0.15}{a2}
\cnode*(2,3){0.15}{a3}%
\cnode*(2,4){0.15}{a4}%
\cnode*(4,0){0.15}{b1} \cnode*(4,1){0.15}{b2}
\cnode*(4,3){0.15}{b3}%
\cnode*(4,4){0.15}{b4}%
\cnode*(6,2){0.14}{t}

\cnode*(3,1.5){0.05}{h1}%
\cnode*(3,2){0.05}{h2}%
\cnode*(3,2.5){0.05}{h3}%

\rput(0,1.5){$s$} \rput(6,1.5){$t$}

\ncline{->}{s}{a1}
\ncline{->}{s}{a2}\ncline{->}{s}{a3}\ncline{->}{s}{a4}
\ncline{->}{a1}{b1}\ncline{->}{a2}{b2}\ncline{->}{a3}{b3}\ncline{->}{a4}{b4}
\ncline{->}{b1}{t}\ncline{->}{b2}{t}\ncline{->}{b3}{t}\ncline{->}{b4}{t}

\end{pspicture}

\end{center}
\caption{\label{fig:asym} {Digraphs from $\cal G$}}
\end{figure}
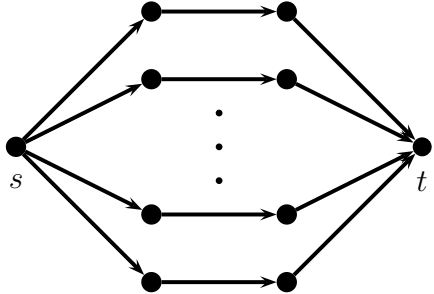
Then any non-empty subset of $V(G_i)\setminus \{s,t\}$ is convex
but only $3i$ of these $4^i-1$ sets are connected. There are $3^i$
connected convex sets containing $s$ but not $t$, $3^i$ connected
convex sets containing $t$ but not $s$ and only one connected
convex set containing $s$ and $t$ (namely $V(G_i)$). Hence $G_i$
contains at least $4^i-1$ convex sets but at most $2\cdot
3^i+3i+1$ connected convex sets and thus $\cc(G_i)/\co(G_i)
\rightarrow 0$ as $i\rightarrow \infty$.

\section{Connected convex sets of size $k$} \label{sec:size}
In this section we show that  any acyclic digraph $D$ of order $n$
contains at least $n-i+1$ connected convex sets of order $i$ for
each $i\in [n]$. To do so we first show that every connected
convex set that is not the entire graph  can be extended by one
vertex, and that there always exists a source or a sink that is
not a cut-vertex.

\begin{lemma}\label{lem:extension}
Let $D=(V,A)$ be a connected acyclic digraph and let $H\neq V$ be a
connected convex set in $D$. Then there exists a vertex $v$ in
$V\setminus H$ such that $H\cup \{v\}$ is a connected convex set in
$D$.
\end{lemma}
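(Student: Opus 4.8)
The plan is to reduce the problem to a single checkable condition and then select the right vertex via a topological order. Since $D$ is connected and $H\neq V$, some vertex of $V\setminus H$ is joined by an arc to $H$. Reversing all arcs of $D$ preserves acyclicity, connectedness, and convexity, so without loss of generality I may assume that $H$ has an \emph{out-neighbour}, i.e.\ a vertex $v\in V\setminus H$ with an arc $hv$ for some $h\in H$. For any such candidate $v$, the set $H\cup\{v\}$ is automatically connected, and since $H$ is already convex, the only possible new violations of convexity involve $v$ as an endpoint of an offending path. Thus it suffices to find an out-neighbour $v$ for which (1) no directed path runs from $v$ to a vertex of $H$ through a vertex outside $H\cup\{v\}$, and (2) no directed path runs from a vertex of $H$ to $v$ through a vertex outside $H\cup\{v\}$.

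Condition (1) comes for free. If $v$ is an out-neighbour via the arc $h_0v$ with $h_0\in H$, and there were a directed path from $v$ to some $h\in H$, then prepending $h_0$ would give a directed path from $h_0\in H$ to $h\in H$ passing through $v\notin H$, contradicting the convexity of $H$. Hence there is no directed path from $v$ to any vertex of $H$ whatsoever, and (1) holds vacuously for \emph{every} out-neighbour.

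For condition (2) I would fix a topological ordering of $D$ and let $v$ be the out-neighbour of $H$ that comes \emph{earliest} in this ordering. Suppose some directed path $h=w_0\to w_1\to\cdots\to w_m=v$ with $w_0\in H$ has an internal vertex outside $H\cup\{v\}$. Let $j$ be the least index with $w_{j+1}\notin H$; this exists because $w_m=v\notin H$, and it forces $w_0,\dots,w_j\in H$. Then the arc $w_jw_{j+1}$ exhibits $w_{j+1}$ as an out-neighbour of $H$. If $w_{j+1}=v$ then $j+1=m$, so every internal vertex of the path lies in $H$, contradicting the choice of path; otherwise $w_{j+1}\neq v$ is an out-neighbour lying strictly before $v$ in the topological order (since $w_{j+1}\to\cdots\to v$ is a directed path), contradicting the minimality of $v$. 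Either way (2) holds, so $H\cup\{v\}$ is a connected convex set.

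The step I expect to require the most care is the last one: one must verify that the first departure vertex $w_{j+1}$ is genuinely an out-neighbour of $H$ that is distinct from, and topologically earlier than, $v$, which is exactly what forces the contradiction. The arc-reversal reduction and the vacuous verification of (1) are the conceptual shortcuts that keep the argument short.
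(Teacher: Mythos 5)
Your proof is correct, but it takes a genuinely different route from the paper's at the decisive step. Both arguments reduce, after a symmetry step (your arc reversal is a cleaner formalization of the paper's ``the other case can be treated similarly''), to finding a good out-neighbour of $H$, and both dispose of directed paths from the new vertex back into $H$ in the same way: prepend the arc leaving $H$ and contradict the convexity of $H$ (this is the paper's case of the path $Q$, where acyclicity guarantees the prepended walk is a genuine path --- a detail you gloss over exactly as lightly as the paper does). The difference is in how the chosen vertex is selected and how paths from $H$ \emph{to} it are excluded. The paper fixes a target $v$ reached by an arbitrary arc out of $H$, takes a \emph{longest} directed path $P$ from $H$ to $v$ with all internal vertices outside $H$, and lets $w$ be its second vertex; killing a bad path $R$ from $H$ to $w$ then needs a case analysis on whether $R$ meets $P$ (yielding a closed walk, contradicting acyclicity) or is disjoint from it (yielding a longer path, contradicting maximality). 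You instead take the out-neighbour $v$ of $H$ that is \emph{earliest in a topological order}; a bad path from $H$ to $v$ has a first exit vertex $w_{j+1}$, which is either $v$ itself (so the path had no outside internal vertex) or an out-neighbour admitting a directed path to $v$, hence topologically earlier than $v$ --- an immediate contradiction. Your version buys a cleaner argument: it avoids the path-intersection case analysis entirely, and it identifies the extendable vertex explicitly (the topologically first out-neighbour), which is more algorithmic in flavour; the paper's version is self-contained in that it never invokes the existence of a topological ordering, relying only on maximality of a path and acyclicity.
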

\begin{proof}
Since $D$ is connected there is an arc $uv$ with either $u\in
V\setminus H$ and $v \in  H$, or $u \in  H$ and $v\in V\setminus H$.
We may assume that $u \in  H$ and $v\in V\setminus H$ as the other
case can be treated similarly. Consider a longest directed path $P$
in which the initial vertex is in $H$, the terminal vertex is $v$
and all other vertices are in $V\setminus H$. Let $y$ and $w$ be the
initial and second vertices in $P$, respectively. We will prove that
$H\cup \{w\}$ is a connected convex set. Note that $H\cup \{w\}$ is
connected as $H$ is connected and there is an arc $yw$. Let $x$ be
an arbitrary vertex of $H$. Observe that it suffices to show that
there is no directed path between $x$ and $w$ having intermediate
vertices that are all not in $H\cup \{w\}$.

Suppose that there is a directed path $Q$ from $w$ to $x$. Since $D$
is acyclic, $y\neq x$ and we obtain a directed path $yQ$ between $y$
and $x$ containing  $w$ and thus vertices outside $H$, a
contradiction with the convexity of $H$. Suppose that there is a
directed path $R$ from $x$ to $w$ having intermediate vertices that
are not in $H\cup \{w\}$. If $R$ and $P$ have a vertex $z$ in common
then there is a closed walk from $z$ to itself via $w$, which is
impossible as $D$ is acyclic. If $R$ and $P$ have no vertex in
common, then $R(P-y)$ is a directed path from $x$ to $v$ with all
intermediate vertices in $V\setminus H$, and  it is longer than $P$.
This contradicts the choice of $P$.
\end{proof}

\begin{lemma}\label{lem:cut}
Let $D$ be a connected  acyclic digraph of order $n\geq 2$. Then
there exist at least two non-cut-vertices that are a source or a
sink.
\end{lemma}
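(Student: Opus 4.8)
The plan is to reduce the statement to the block (biconnected-component) structure of the underlying undirected graph $G$ of $D$, and to lean on two elementary facts about acyclic digraphs. First, every induced subdigraph of $D$ has a source and a sink, obtained as the first and last vertices of a topological order. Second, in a \emph{connected} digraph on at least two vertices the source and the sink are necessarily distinct, since a vertex that is simultaneously a source and a sink would have in-degree and out-degree $0$ and hence be isolated, contradicting connectedness.

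First I would dispose of the case in which $G$ has no cut-vertex. Then every vertex of $D$ is a non-cut-vertex, so it suffices to exhibit two distinct vertices each of which is a source or a sink; a source $s$ and a sink $t$ of $D$ serve, being distinct by the observation above.

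The main case is when $G$ has a cut-vertex. Here I would pass to the block--cut-vertex tree $T$ of $G$. Since there is a cut-vertex, $T$ contains at least two block nodes and at least one cut-vertex node, hence at least three nodes in total, so $T$ has at least two leaves; as every cut-vertex node of $T$ has degree at least two, all leaves are block nodes, and $G$ has at least two endblocks $B_1,B_2$, each meeting the rest of $G$ in a single cut-vertex $c_1,c_2$. The crucial structural point --- and the step I expect to be the main obstacle --- is that every vertex $v\in V(B_i)\setminus\{c_i\}$ belongs to no block other than $B_i$, from which I would deduce (i) $v$ is a non-cut-vertex of $G$, and (ii) every arc of $D$ incident with $v$ lies inside $B_i$. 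By (ii), $v$ is a source (resp.\ sink) of the whole digraph $D$ if and only if it is a source (resp.\ sink) of the induced subdigraph $D[B_i]$.

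It then remains to locate, inside each endblock, one such vertex that is a source or a sink. Because $D[B_i]$ is a connected acyclic digraph on at least two vertices, it has a source and a sink, and these are distinct; hence at least one of them, call it $v_i$, differs from $c_i$. By (i) and (ii), $v_i$ is a non-cut-vertex of $G$ that is a source or a sink of $D$. Finally $v_1\neq v_2$, since distinct blocks share no vertex other than a cut-vertex while neither $v_1$ nor $v_2$ is a cut-vertex, so $v_1$ and $v_2$ are the two required vertices. I would spend most of the care on justifying the block-theoretic claim (i)--(ii), namely that a vertex of an endblock other than its cut-vertex lies in that block alone; the remainder is a short deduction from the source/sink facts.
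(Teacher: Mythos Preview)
Your proposal is correct; the block--cut-vertex argument goes through exactly as you outline, and the point you flag as the main obstacle (that a non-cut-vertex of an endblock lies in that block alone, hence all its incident edges lie in that block) is indeed the standard fact that two distinct blocks meet only in a cut-vertex, together with the fact that every edge lies in a unique block.

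The paper takes a different route: it proceeds by induction on $n$. Choosing a source $s$ and a sink $t$, if both are non-cut-vertices one is done; otherwise, say $s$ is a cut-vertex, and the induction hypothesis is applied to $D[C_i+s]$ for two components $C_i$ of $D-s$, each yielding two non-cut source/sink vertices, from which one picks a vertex distinct from $s$ in each and checks it retains the required properties in $D$. Your approach replaces this induction by the global structure of the block--cut tree, locating the two required vertices directly in two endblocks. The trade-off is that the paper's argument is self-contained and needs no block machinery, whereas yours is non-inductive and makes the reason the two vertices can be chosen disjoint (they sit in distinct endblocks) completely transparent; it also makes explicit why a local source/sink in an endblock is a global source/sink, a step the paper's proof leaves to the reader with ``it is easily verified.''
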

\begin{proof}
We prove the result by induction on $n$. For $n=2$ the result
clearly holds as both vertices are non-cut-vertices and one is a
source and one is a sink. Now assume that the lemma is true for
all connected acyclic digraphs of order at most $n_0\geq 2$, and
let $D$ be a connected acyclic digraph of order $n_0+1$. Since $D$
is acyclic there exists a source $s$  and sink $t$ of $D$; see for
example \cite{bang2000}. If $s$ and $t$ are non-cut-vertices, then
the claim of the lemma is satisfied. So assume that one of $s$ and
$t$, say $s$, is a cut-vertex. Let $C_1,\ldots,C_k$ be the vertex
sets of the components of $D-s$. Note that $k\geq 2$ and that for
all $i=1,\ldots,k$, $D[C_1+s]$ is a connected acyclic digraph of
order at least $2$ and at most $n_0$. Thus, by the induction
hypothesis each of the digraphs $D[C_1+s]$ and $D[C_2+s]$ has at
least two non-cut-vertices that are either a source or a sink.
Choose two of these vertices that are not equal to $s$. It is
easily verified that these vertices are non-cut-vertices and are
sources or sinks of $D$.
\end{proof}

\begin{theorem}
Let $D$ be a connected acyclic digraph of order $n$. Then there
exist at least $n-i+1$ connected convex sets in $D$ of order $i$ for
each $i\in [n]$.
\end{theorem}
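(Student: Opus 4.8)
The plan is to argue by induction on the order $n$ of $D$, feeding the two preceding lemmas into the inductive step. The base case $n=1$ is immediate: the single vertex is a connected convex set of order $1$, and $n-i+1=1$ when $i=1$. For the inductive step with $n\ge 2$, I would first invoke Lemma~\ref{lem:cut} to select a vertex $v$ that is simultaneously a non-cut-vertex and either a source or a sink. Then $D':=D-v$ is again a connected acyclic digraph, now of order $n-1$, so the induction hypothesis applies to it and yields at least $(n-1)-i+1=n-i$ connected convex sets of order $i$ in $D'$ for every $i\in[n-1]$.

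The crucial observation --- and the step I expect to be the real obstacle --- is that the connected convex sets of $D'$ are precisely the connected convex sets of $D$ that avoid $v$. Connectivity of a set $S\subseteq V\setminus\{v\}$ is unaffected, since $D[S]=D'[S]$, so the issue is convexity. Here I would exploit that $v$ is a source or a sink: an internal vertex of a directed path must carry both an in-arc and an out-arc of that path, so a source or a sink can never occur as an internal vertex of any directed path in $D$. Consequently, for a set $S$ with $v\notin S$, any directed path between two vertices of $S$ that passes through a vertex outside $S$ cannot use $v$ (neither as endpoint, since both endpoints lie in $S$, nor internally), and hence is already a path of $D'$; conversely every path of $D'$ is a path of $D$. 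Thus $S$ is convex in $D$ if and only if it is convex in $D'$, so $\scc(D')\subseteq\scc(D)$ and each member of $\scc(D')$ avoids $v$. This furnishes at least $n-i$ connected convex sets of order $i$ in $D$ that do not contain $v$, for each $i\in[n-1]$.

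To produce the one extra set of each order, I would build a single increasing chain of connected convex sets all containing $v$. Starting from $H_1=\{v\}$, which is a connected convex set of order $1$, and repeatedly applying Lemma~\ref{lem:extension}, I obtain connected convex sets $H_1\subset H_2\subset\cdots\subset H_n=V$ with $|H_i|=i$; each $H_i$ contains $v$ and so is distinct from every set counted in the previous paragraph. Combining the two counts completes the induction: for $i\in[n-1]$ the $n-i$ sets avoiding $v$ together with $H_i$ give at least $n-i+1$ connected convex sets of order $i$, while for $i=n$ the whole vertex set $V$ is the single required set, matching $n-n+1=1$.
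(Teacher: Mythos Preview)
Your proposal is correct and follows essentially the same approach as the paper: induction on $n$, deletion of a non-cut source or sink $v$ via Lemma~\ref{lem:cut}, lifting the connected convex sets of $D-v$ to $D$ (using that a source or sink is never an internal vertex of a directed path), and producing one extra set of each order containing $v$ by iterating Lemma~\ref{lem:extension}. Your write-up is in fact slightly more explicit than the paper's in justifying the convexity-preservation step and in framing the iterated extensions as a chain $H_1\subset\cdots\subset H_n$.
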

\begin{proof}
We prove the result by induction on $n$. For $n=1$ the result is
trivial. Assume that the claim of the theorem holds for all digraphs
of order $1\leq n\leq n_0$. We will show the result for an arbitrary
connected acyclic digraphs $D$ of order $n_0+1$.  Since $D$ is
connected and acyclic, by Lemma~\ref{lem:cut} there exists a source
or a sink $s$ of $D$ that is not a cut-vertex. By the induction
hypothesis $D- s$ contains $n_0-i+1$ connected convex sets of size
$i=1,2,\ldots,n_0$. These sets are also connected convex sets of $D$
as $s$ is a source or sink and thus no directed path between
vertices in $D- s$ can use $s$. In addition by a using
Lemma~\ref{lem:extension} $i-1$ times we can deduce that there is at
least one connected convex set in $D$ of size $i$ containing $s$.
Thus, there exist at least $n_0-i+1+1= (n_0+1)-i+1$ sets of size $i$
for each $i\in[n_0]$. Since the vertex set of the digraph is a
connected convex set there is also (exactly) one such set of order
$n_0+1$.
\end{proof}

Let us remark that the directed path of order $n$ has exactly
$n-i+1$ connected convex subgraphs of order $i$.

\section{Open Question}

Recall that the authors of \cite{gjrssy07} designed an $O(n\cdot
cc(D))$-time algorithm to generate all connected convex sets of a
connected acyclic digraph $D$. As we showed in Section
\ref{sec:example} there could be an asymptotically faster algorithm.
Is there an $O(\sum_{C\in \scc(D)}|C|)$-time algorithm? It is known
\cite{ps} that there is an algorithm for generating all (not
necessarily connected) convex sets of a connected acyclic digraph
$D$ in $O(\sum_{C\in \sco(D)}|C|)$ time.

\vspace{3mm}

\noindent{\bf Acknowledgement} Research of Gutin was supported in
part by an EPSRC grant.

{\small

}


\begin{thebibliography}{99}

\bibitem{ps} P.~Balister, S.~Gerke, G.~Gutin, A.~Johnstone, J.~Reddington, E.~Scott,
A.~Soleimanfallah, and A.~Yeo, Algorithms for convex and connected
convex sets in acyclic digraphs, in preparation.

\bibitem{bang2000} J.~Bang-Jensen and G.~Gutin, {\em Digraphs: Theory, Algorithms
and Applications}. Springer-Verlag, London, 2000.

\bibitem{chen} X. Chen, D.L. Maskell, and Y. Sun,
Fast identification of custom instructions for extensible
processors. {\em IEEE Trans. Computer-Aided Design Integr. Circuits
Syst.} {\bf 26} (2007), 359--368.

\bibitem{gjrssy07} G.~Gutin, A.~Johnstone, J.~Reddington, E.~Scott,
A.~Soleimanfallah, and A.~Yeo, An algorithm for finding connected
convex subgraphs of an acyclic digraph, in Proc. ACiD'07 (Algorithms
and Complexity in Durham), College Press, 2007.

\bibitem{gy07}
G.~Gutin and A.~Yeo, On the number of connected convex subgraphs of
a connected acyclic graph, submitted.

\end{thebibliography}
\end{document}